\tikzset{->-/.style={decoration={
  markings,
  mark=at position .97 with {\arrow{>}}},postaction={decorate}}}
\begin{document}

\newcommand{\listbubbles}{\ensuremath{\mathtt{enumerate\_bubbles}}}

\title{A polynomial delay algorithm for the enumeration of bubbles with length constraints in directed graphs and its application to the detection of alternative splicing in RNA-seq data}
%
\titlerunning{Enumerating bubbles}  
%

\author{Gustavo Sacomoto \and Vincent Lacroix \and Marie-France Sagot}
%
\tocauthor{}
\institute{
  INRIA Rh\^one-Alpes, 38330 Montbonnot Saint-Martin, France 
\and
  Universit\'e de Lyon, F-69000 Lyon; Universit\'e Lyon 1; CNRS, UMR5558,
  Laboratoire de Biom\'etrie et Biologie \'Evolutive, F-69622 Villeurbanne, France \\
  \email{\{gustavo.sacomoto, Marie-France.Sagot\}@inria.fr}, \email{vincent.lacroix@univ-lyon1.fr}  
}

\maketitle    

\begin{abstract}
We present a new algorithm for enumerating bubbles with length
constraints in directed graphs. This problem arises in
transcriptomics, where the question is to identify all alternative
splicing events present in a sample of mRNAs sequenced by RNA-seq.
This is the first polynomial-delay algorithm for this problem and we
show that in practice, it is faster than previous approaches.  This
enables us to deal with larger instances and therefore to discover
novel alternative splicing events, especially long ones, that were
previously overseen using existing methods.
\end{abstract}


\section{Introduction}
Transcriptomes of model or non model species can now be studied by
sequencing, through the use of RNA-seq, a protocol which enables us to
obtain, from a sample of RNA transcripts, a (large) collection of
(short) sequencing reads, using Next Generation Sequencing (NGS)
technologies \cite{Burge,Mortazavi}. Nowadays, a typical experiment
produces 100M reads of 100nt each. However, the original RNA molecules
are longer (typically 500-3000nt) and the general computational
problem in the area is then to be able to assemble the reads in order
to reconstruct the original set of transcripts.  This problem is not
trivial for mainly two reasons. First, genomes contain repeats that
may be longer than the read length. Hence, a read does not necessarily
enable to identify unambiguously the locus from which the transcript
was produced. Second, each genomic locus may generate several types of
transcripts, either because of genomic variants (i.e. there may exist
several alleles for a locus) or because of transcriptomic variants
(i.e. alternative splicing or alternative transcription start/end may
generate several transcripts from a single locus that differ by the
inclusion or exclusion of subsequences). Hence, if a read matches a
subsequence shared by several alternative transcripts, it is a priori
not possible to decide which of these transcripts generated the read.

General purpose transcriptome assemblers
\cite{Trinity,TransAbyss,Oases} aim at the general goal of identifying
all alternative transcripts, but because of the extensive use of
heuristics, they usually fail to identify infrequent transcripts, tend
to report several fragments for each gene, or fuse genes that share
repeats.  Local transcriptome assemblers \cite{Sacomoto2012}, on the
other hand, aim at a simpler goal, as they do not reconstruct full
length transcripts. Instead, they focus on reporting all variable
regions (polymorphisms): whether genomic (SNPs, indels) or
transcriptomic (alternative splicing events). They are much less
affected by the issue of repeats, since they focus only on the
variable regions. They can afford to be exact and therefore are able
to have access to infrequent transcripts.  The fundamental idea is
that each polymorphism corresponds to a recognizable pattern, called a
bubble in the de Bruijn graph built from the RNA-seq reads. In
practice, only bubbles with specific length constraints are of
interest. However, even with this restriction, the number of such
bubbles can be exponential in the size of the graph. Therefore, as
with other enumeration problems, the best possible algorithm is one
spending time polynomial in the input size between the output of two
bubbles, i.e. a polynomial delay algorithm

In this paper, we introduce the first polynomial delay algorithm to
enumerate all bubbles with length constraints in a weighted directed
graph.  Its complexity in the best theoretical case for general graphs
is $O(n(m+n \log n))$ (Section~\ref{sec:alg_delay}) where $n$ is the
number of vertices in the graph, $m$ the number of arcs.  In the
particular case of de Bruijn graphs, the complexity is $O(n(m+n \log
\alpha))$ (Section~\ref{subsec:dijkstra}) where $\alpha$ is a constant
related to the length of the skipped part in an alternative splicing
event. In practice, an algorithmic solution in $O(nm\log n)$
(Section~\ref{subsec:comp_kissplice}) appears to work better on de
Bruijn graphs built from such data.  We implemented the latter, show
that it is more efficient than previous approaches and outline that it
enables us to discover novel long alternative splicing events.

\section{De Bruijn graphs and alternative splicing} \label{sec:debruijn_and_as}
A \emph{de Bruijn graph} (DBG) is a directed graph $G=(V,A)$ whose
vertices $V$ are labeled by words of length $k$ over an alphabet
$\Sigma$.  An arc in $A$ links a vertex $u$ to a vertex $v$ if the
suffix of length $k-1$ of $u$ is equal to the prefix of $v$.  The out
and the in-degree of any vertex are therefore bounded by the size of
the alphabet $\Sigma$. In the case of NGS data, the $k$-mers
correspond to all words of length $k$ present in the reads of the
input dataset, and only those. In relation to the classical de Bruijn
graph for all possible words of size $k$, the DBG for NGS data may
then not be complete.  Given two vertices $s$ and $t$ in $G$, an
$(s,t)$-path is a path from $s$ to $t$. As defined in \cite{ourSPIRE},
by an $(s,t)$-bubble, we mean two vertex-disjoint $(s,t)$-paths. This
definition is, of course, not restricted to de Bruijn graphs.

As was shown in \cite{Sacomoto2012}, polymorphisms (i.e. variable
parts) in a transcriptome (including alternative splicing (AS) events)
correspond to recognizable patterns in the DBG that are precisely the
$(s,t)$-bubbles. Intuitively, the variable parts correspond to
alternative paths and the common parts correspond to the beginning and
end points of those paths. More formally, any process generating
patterns $awb$ and $aw'b$ in the sequences, with $a,b,w,w' \in
\Sigma^*$, $|a| \geq k, |b|\geq k$ and $w$ and $w'$ not sharing any
$k$-mer, creates a $(s,t)$-bubble in the DBG. In the special case of
AS events excluding mutually exclusive exons, since $w'$ is empty, one
of the paths corresponds to the \emph{junction} of $ab$, i.e. to
$k$-mers that contain at least one letter of each sequence. Thus the
number of vertices of this path in the DBG is predictable: it is at
most\footnote{The size is \emph{exactly} $k-1$ if $w$ has no common
  prefix with $b$ and no common suffix with $a$.}  $k-1$. An example
is given in Fig. \ref{fig:ex_bubble}. In practice \cite{Sacomoto2012},
an upper bound $\alpha$ to the other path and a lower bound $\beta$ on
both paths is also imposed. In other words, an AS event corresponds to
a $(s,t)$-bubble with paths $p_1$ and $p_2$ such that $p_1$ has at
most $\alpha$ vertices, $p_2$ at most $k-1$ and both have at least
$\beta$ vertices.
\begin{figure}[htb]
\center
\resizebox{!}{1.7cm}{%
  \begin{tikzpicture}[->,>=stealth',shorten >=1pt,auto,node distance=1.5cm,
      thick,main node/.style={rectangle,draw,font=\sffamily\bfseries}]

    \node[main node] (1) {\color{red}ACT};
    \node[main node] (2) [right of=1] {{\color{red}CT}{\color{blue}G}};
    \node[main node] (3) [above right of=2, yshift=-0.2cm] {{\color{red}T}{\color{black}GG}};
    \node[main node] (4) [right of=3] {\color{black}GGA};
    \node[main node] (5) [right of=4] {{\color{black}GA}{\color{cyan}G}};
    \node[main node] (6) [right of=5] {{\color{black}A}{\color{cyan}GC}};
    \node[main node] (7) [below right of=6] {\color{cyan}GCG};
    \node[main node] (8) [right=2.4cm of 2, yshift=-0.80cm] {{\color{red}T}{\color{cyan}GC}};
    
    \path[every node/.style={font=\sffamily\small}]
      (1) edge node [left] {} (2)
      (2) edge node [left] {} (3)
          edge node [left] {} (8)
      (3) edge node [left] {} (4)
      (4) edge node [left] {} (5)
      (5) edge node [left] {} (6)
      (6) edge node [left] {} (7)
      (8) edge node [left] {} (7);
  \end{tikzpicture}
}
\label{fig:ex_bubble}
\caption{DBG with $k=3$ for the sequences: {\color{red}
    ACT}{\color{black}GGA}{\color{cyan}GCG} ($awb$) and {\color{red}
    ACT}{\color{cyan}GCG} ($ab$). The pattern in the sequence
  generates a $(s,t)$-bubble, from {\color{red}CT}{\color{blue}G} to
  {\color{cyan}GCG}. In this case, $b=$ {\color{cyan}GCG} and $w=$ GGA
  have their first letter {\color{blue}G} in common, so the path
  corresponding to the junction $ab$ has $k-1-1 = 1$ vertex.}
\end{figure}
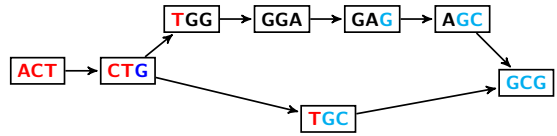

Given a directed graph $G$ with \emph{non-negative} arc weights $w: E
\mapsto \mathbb{Q}_{\geq 0}$, the length of the path $p = (v_0,v_1)
\ldots (v_{n-1}, v_n)$ is the sum of the weights of the edges in $p$
and is denoted by $|p|$. The distance, length of the shortest path,
from $u$ to $v$ is denoted by $d(u,v)$. We extend the definition of
bubble given above.

\begin{definition}[$(s,t,\alpha_1,\alpha_2)$-bubble] 
  A \emph{$(s,t, \alpha_1,\alpha_2)$-bubble} in a weighted directed
  graph is a $(s,t)$-bubble with paths $p_1,p_2$ satisfying $|p_1|
  \leq \alpha_1$ and $|p_2| \leq \alpha_2$.
\end{definition}

In practice, when dealing with DBGs built from NGS data, in a lossless
preprocessing step, all maximal non-branching linear paths of the
graph (i.e. paths containing only vertices with in and out-degree 1)
are compressed each into one single vertex, whose label corresponds to
the label of the path (i.e. it is the concatenation of the labels of
the vertices in the path without the overlapping part(s)). The
resulting graph is the \emph{compressed de Bruijn graph} (cDBG). In
the cDBG, the vertices can have labels larger than $k$, but an arc
still indicates a suffix-prefix overlap of size $k-1$. Finally, since
the only property of a bubble corresponding to an AS event is the
constraint on the length of the path, we can disregard the labels from
the cDBG and only keep for each vertex its label
length\footnote{Resulting in a graph with weights in the vertices. Here,
  however, we consider the weights in the arcs. Since this is more
  standard and, in our case, both alternatives are equivalent, we can
  transform one into another by splitting vertices or arcs.}. In this
way, searching for bubbles corresponding to AS events in a cDBG can be
seen as a particular case of looking for $(s,t, \alpha_1,
\alpha_2)$-bubbles satisfying the lower bound $\beta$ in a
non-negative weighted directed graph.

Actually, it is not hard to see that the enumeration of $(s,t,
\alpha_1,\alpha_2)$-bubbles, for all $s$ and $t$, satisfying the lower
bound $\beta$ is NP-hard. Indeed, deciding the existence of at least
one $(s,t, \alpha_1,\alpha_2)$-bubble, for some $s$ and $t$, with the
lower bound $\beta$ in a weighted directed graph where all the weights
are 1 is NP-complete. It follows by a simple reduction from the
Hamiltonian $st$-path problem~\cite{Cormen}: given a directed graph
$G = (V,E)$ and two vertices $s$ and $t$, build the graph $G'$ by
adding to $G$ the vertices $s'$ and $t'$, the arcs $(s,s')$ and
$(t,t')$, and a new path from $s'$ to $t'$ with exactly $|V|$
nodes. There is a $(x,y,|V|+2,|V|+2)$-bubble, for some $x$ and $y$,
satisfying the lower bound $\beta = |V| + 2$ in $G'$ if and only if
there is a Hamiltonian path from $s$ to $t$ in $G$.

From now on, we consider the enumeration of all
$(s,t,\alpha_1,\alpha_2)$-bubbles (without the lower bound) for a
given source (fixed $s$) in a non-negative weighted directed graph $G$
(not restricted to a cDBG). The number of vertices and arcs of $G$ is
denoted by $n$ and $m$, respectively.

\section{An $O(n (m + n \log n))$ delay algorithm} \label{sec:alg_delay}
In this section, we present an $O(n (m + n \log n))$ delay algorithm
to enumerate, for a fixed source $s$, all
$(s,t,\alpha_1,\alpha_2)$-bubbles in a general directed graph $G$ with
non-negative weights.  In a \emph{polynomial delay} enumeration
algorithm, the time elapsed between the output of two solutions is
polynomial in the instance size.  The pseudocode is shown in
Algorithm~\ref{listbubbles2}. It is important to stress that this
pseudocode uses high-level primitives, e.g. the tests in
lines~\ref{alg2:initial_test}, \ref{alg2:include} and
\ref{alg2:exclude}. An efficient implementation for the test in
line~\ref{alg2:include}, along with its correctness and analysis, is
implicitly given in Lemma~\ref{lem:test2}. This is a central result in
this section. For its proof we need Lemma~\ref{lem:dist}.

Algorithm~\ref{listbubbles2} uses a recursive strategy, inspired by
the binary partition method, that successively divides the solution
space at every call until the considered subspace is a singleton.  In
order to have a more symmetric structure for the subproblems, we
define the notion of a \emph{pair of compatible paths}, which is an
object that generalizes the definition of a
$(s,t,\alpha_1,\alpha_2)$-bubble. Given two vertices $s_1,s_2 \in V$
and upper bounds $\alpha_1, \alpha_2 \in \mathbb{Q}_{\geq0}$, the
paths $p_1 = s_1 \leadsto t_1$ and $p_2 = s_2 \leadsto t_2$ are a
\emph{pair of compatible paths} for $s_1$ and $s_2$ if $t_1 = t_2$,
$|p_1| \leq \alpha_1$, $|p_2| \leq \alpha_2$ and the paths are
internally vertex-disjoint. Clearly, every
$(s,t,\alpha_1,\alpha_2)$-bubble is also a pair of compatible paths
for $s_1 = s_2 = s$ and some $t$.

Given a vertex $v$, the set of out-neighbors of $v$ is denoted by
$\delta^+(v)$. Let now $\mathcal{P}_{\alpha_1,\alpha_2}(s_1,s_2,G)$ be
the set of all pairs of compatible paths for $s_1$, $s_2$, $\alpha_1$
and $\alpha_2$ in $G$. We have\footnote{The same relation is true
  using $s_1$ instead of $s_2$.} that:
\begin{equation} \label{eq:partition}
\mathcal{P}_{\alpha_1, \alpha_2}(s_1,s_2,G) = \mathcal{P}_{\alpha_1, \alpha_2}(s_1,s_2,G') 
                     \bigcup_{v \in \delta^+(s_2)} (s_2,v) \mathcal{P}_{\alpha_1, \alpha_2'} (s_1,v,G - s_2), 
\end{equation}
where $\alpha_2' = \alpha_2 - w(s_2,v)$ and $G' = G - \{(s_2,v) | v
\in \delta^+(s_2) \}$. In other words, the set of pairs of compatible
paths for $s_1$ and $s_2$ can be partitioned into:
$\mathcal{P}_{\alpha_1, \alpha_2'} (s_1,v,G - s_2)$, the sets of pairs
of paths containing the arc $(s_2,v)$, for each $v \in \delta^+(s_2)$;
and $\mathcal{P}_{\alpha_1, \alpha_2}(s_1,s_2,G')$, the set of pairs
of paths that do not contain any of them. Algorithm~\ref{listbubbles2}
implements this recursive partition strategy. The solutions are only
output in the leaves of the recursion tree (line~\ref{alg2:output}),
where the partition is always a singleton.  Moreover, in order to
guarantee that every leaf in the recursion tree outputs at least one
solution, we have to test if $\mathcal{P}_{\alpha_1, \alpha_2'}
(s_1,v,G - s_2)$ (and $\mathcal{P}_{\alpha_1, \alpha_2}(s_1,s_2,G')$)
is not empty before making the recursive call
(lines~\ref{alg2:include} and \ref{alg2:exclude}).

\begin{algorithm} 
\caption{$\listbubbles(s_1,\alpha_1,s_2,\alpha_2, B, G)$} \label{listbubbles2}
\If{$s_1 = s_2$}{ 
  \uIf{$B \neq \emptyset$}{
    output(B) \\ \label{alg2:output}
    \bf return 
  } 
  \ElseIf{there is no $(s, t, \alpha_1,\alpha_2)$-bubble, where $s = s_1 = s_2$}{ \label{alg2:initial_test} 
    \bf return
  } 
} 
choose $u \in \{s_1,s_2\}$, such that $\delta^+(u) \neq \emptyset$ \\
\For{$v \in \delta^+(u)$}{
  \If{there is a pair of compatible paths using $(u,v)$ in $G$}{ \label{alg2:include}
    \uIf{$u = s_1$}{ 
      $\listbubbles(v, \alpha_1 - w(s_1,v), s_2, \alpha_2, B \cup (s_1,v), G - s_1)$ \label{alg2:rec_include1}
    } 
    \Else{ 
      $\listbubbles(s_1, \alpha_1, v, \alpha_2 - w(s_2,v), B \cup (s_2,v), G - s_2)$ \label{alg2:rec_include2}
    } 
  } 
}
\If{there is a pair of compatible paths in $G - \{(u,v) | v \in \delta^+(u) \}$}{ \label{alg2:exclude}
  $\listbubbles(v, \alpha_1, s_2, \alpha_2, B, G - \{(u,v) | v \in \delta^+(u) \})$ \label{alg2:rec_exclude}
}
\end{algorithm}

The correctness of Algorithm~\ref{listbubbles2} follows directly from
the relation given in Eq.~\ref{eq:partition} and the correctness of
the tests performed in lines \ref{alg2:include} and \ref{alg2:exclude}. In the
remaining of this section, we describe a possible implementation for the
tests, prove correctness and analyze the time complexity. Finally, we
prove that Algorithm~\ref{listbubbles2} has an $O(n(m + n \log n))$
delay.

\begin{lemma} \label{lem:dist}
  There exists a pair of compatible paths for $s_1 \neq s_2$ in $G$ if
  and only if there exists $t$ such that $d(s_1,t) \leq \alpha_1$ and
  $d(s_2, t) \leq \alpha_2$.
\end{lemma}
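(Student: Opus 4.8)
The forward direction is immediate, so I would dispatch it first. If $p_1 = s_1 \leadsto t$ and $p_2 = s_2 \leadsto t$ form a pair of compatible paths, they share the common endpoint $t$, and since $p_1$ is \emph{a} path from $s_1$ to $t$ we have $d(s_1,t) \le |p_1| \le \alpha_1$; likewise $d(s_2,t) \le |p_2| \le \alpha_2$. Hence this same $t$ witnesses the right-hand side.

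The content is in the converse, and this is where I expect the real work. Suppose some $t$ satisfies $d(s_1,t) \le \alpha_1$ and $d(s_2,t) \le \alpha_2$. The plan is to pick a shortest path $q_1$ from $s_1$ to $t$ and a shortest path $q_2$ from $s_2$ to $t$; both may be taken simple, with lengths $d(s_1,t) \le \alpha_1$ and $d(s_2,t) \le \alpha_2$. The main obstacle is that $q_1$ and $q_2$ need not be internally vertex-disjoint: they may cross or share an entire stretch of vertices, so they do not yet constitute a compatible pair. The fix is to stop them at their \emph{first} meeting point instead of at $t$.

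Concretely, I would set $C = V(q_1) \cap V(q_2)$, which is non-empty since $t \in C$. Walking along $q_1$ from $s_1$, let $t'$ be the first vertex of $q_1$ lying in $C$. I then take $p_1$ to be the prefix of $q_1$ from $s_1$ to $t'$ and $p_2$ the prefix of $q_2$ from $s_2$ to $t'$ (well defined since $q_2$ is simple and $t' \in V(q_2)$), and claim that $p_1,p_2$ are a pair of compatible paths ending at $t'$.

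Two things then need verification: disjointness and the length bounds. For disjointness, by the choice of $t'$ the only vertex of $p_1$ lying in $C$ is $t'$ itself; since $V(p_1) \cap V(p_2) \subseteq V(q_1) \cap V(q_2) = C$, it follows that $V(p_1) \cap V(p_2) = \{t'\}$, so the paths meet only at their common endpoint and are internally vertex-disjoint (the degenerate cases $s_1 = t'$ or $s_2 = t'$ merely give a trivial path on one side and are harmless, while $s_1 \ne s_2$ holds by hypothesis). For the lengths, this is exactly where non-negativity of the weights is essential: $p_1$ is a prefix of $q_1$, so $|p_1| \le |q_1| = d(s_1,t) \le \alpha_1$, and symmetrically $|p_2| \le |q_2| = d(s_2,t) \le \alpha_2$. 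Thus $p_1,p_2$ form the desired pair of compatible paths, completing the converse.
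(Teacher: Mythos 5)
Your proof is correct and follows essentially the same route as the paper's: both arguments take paths witnessing the distance bounds, truncate them at their first common vertex $t'$, and invoke non-negativity of the weights to conclude that the truncated prefixes still satisfy $|p_1| \leq \alpha_1$ and $|p_2| \leq \alpha_2$. Your version is merely more explicit (spelling out the forward direction, the set $C$, the disjointness argument via $V(p_1) \cap V(p_2) \subseteq C$, and the degenerate cases), which the paper leaves implicit.
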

\begin{proof}
  Clearly this is a necessary condition. Let us prove that it is also
  sufficient. Consider the paths $p_1 = s_1 \leadsto t$ and $p_2 = s_2
  \leadsto t$, such that $|p_1| \leq \alpha_1$ and $|p_2| \leq
  \alpha_2$. Let $t'$ be the first vertex in common between $p_1$ and
  $p_2$. The sub-paths $p_1' = s_1 \leadsto t'$ and $p_2' = s_2
  \leadsto t'$ are internally vertex-disjoint, and since the weights
  are non-negative, they also satisfy $|p_1'| \leq |p_1| \leq \alpha_1$
  and $|p_2'| \leq |p_2| \leq \alpha_2$.
\end{proof}

Using this lemma, we can test for the existence of a pair of
compatible paths for $s_1 \neq s_2$ in $O(m + n \log n)$ time. Indeed,
let $T_1$ be a shortest path tree of $G$ rooted in $s_1$ and truncated
at distance $\alpha_1$, the same for $T_2$, meaning that, for any
vertex $w$ in $T_1$ (resp. $T_2$), the tree path between $s_1$ and $w$
(resp. $s_2$ and $w$) is a shortest one.  It is not difficult to prove
that the intersection $T_1 \cap T_2$ is not empty if and only if there
is a pair of compatible paths for $s_1$ and $s_2$ in $G$. Moreover,
each shortest path tree can be computed in $O(m + n\log n)$ time,
using Dijkstra's algorithm~\cite{Cormen}. Thus, in order to test for
the existence of a $(s, t, \alpha_1,\alpha_2)$-bubble for some $t$ in
$G$, we can test, for each arc $(s,v)$ outgoing from $s$, the
existence of a pair of compatible paths for $s \neq v$ and $v$ in
$G$. Since $s$ has at most $n$ out-neighbors, we obtain
Lemma~\ref{lem:initial_test}.
  
\begin{lemma} \label{lem:initial_test}
  The test of line \ref{alg2:initial_test} can be performed in $O(n (m
  + n \log n))$.
\end{lemma}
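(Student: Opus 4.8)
The plan is to reduce the test of line~\ref{alg2:initial_test} (does an $(s,t,\alpha_1,\alpha_2)$-bubble exist for \emph{some} $t$, with $s = s_1 = s_2$?) to $O(n)$ compatible-path tests, each of which I will realize in $O(m + n\log n)$ time, and then simply add up the costs. First I would exploit the fact that the two paths of a bubble are internally vertex-disjoint, hence leave $s$ through two distinct arcs. So a bubble rooted at $s$ exists if and only if there is an out-neighbor $v \in \delta^+(s)$ for which, after committing one branch to the arc $(s,v)$, the rest of that branch (a path $v \leadsto t$ of length at most $\alpha_2 - w(s,v)$) together with the other branch (a path $s \leadsto t$ of length at most $\alpha_1$) is a pair of compatible paths for the \emph{distinct} sources $s$ and $v$. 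The forward direction is immediate: strip the first arc off whichever branch is constrained by $\alpha_2$, and note its second vertex is the out-neighbor $v$ being tested. For the converse I would prepend $(s,v)$ to the $v$-rooted path and argue, exactly as in the proof of Lemma~\ref{lem:dist}, that truncating at the first common vertex yields genuinely internally vertex-disjoint $(s,t)$-paths. Iterating over the at most $n$ out-neighbors of $s$ therefore reduces the whole test to at most $n$ compatible-path tests for distinct sources.

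Next I would bound the cost of a single such test for two distinct sources $s_1, s_2$. By Lemma~\ref{lem:dist}, a pair of compatible paths exists if and only if there is a vertex $t$ with $d(s_1,t) \le \alpha_1$ and $d(s_2,t) \le \alpha_2$. I would thus compute the two shortest-path trees $T_1$ and $T_2$, rooted at $s_1$ and $s_2$ and truncated at distances $\alpha_1$ and $\alpha_2$; since Dijkstra explores vertices in order of increasing distance, the truncation is free, and with Fibonacci heaps each run takes $O(m + n\log n)$ time~\cite{Cormen}. A single scan over the $O(n)$ vertices then decides whether $V(T_1) \cap V(T_2) \neq \emptyset$. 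As $T_i$ contains exactly the vertices within distance $\alpha_i$ of $s_i$, this intersection is nonempty precisely when the required common vertex $t$ exists, so one test costs $O(m + n\log n)$.

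Combining the two pieces gives at most $n$ tests at $O(m + n\log n)$ each, for a total of $O(n(m + n\log n))$, which is the claimed bound. I expect the main obstacle to be not the complexity arithmetic but the correctness of the reduction in the first paragraph: one must check that a compatible pair for the \emph{different} sources $s$ and $v$ can always be turned back into a true bubble, i.e. into two internally vertex-disjoint $(s,t)$-paths whose first arcs differ, and in particular that the recombined $s$-branch does not itself pass through $v$. Once Lemma~\ref{lem:dist} is available this hinges only on the non-negativity of the weights and the first-common-vertex truncation used in its proof, so no idea beyond shortest-path trees is required; it is only the bookkeeping of the endpoints that needs care.
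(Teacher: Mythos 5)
Your proposal is correct and follows essentially the same route as the paper: reduce the test to at most $n$ compatible-path existence tests, one per out-neighbor $v\in\delta^+(s)$, and decide each via Lemma~\ref{lem:dist} by intersecting two Dijkstra-computed shortest-path trees truncated at $\alpha_1$ and $\alpha_2$, for a total of $O(n(m+n\log n))$. In fact you are somewhat more careful than the paper (adjusting the bound to $\alpha_2 - w(s,v)$ and checking that the compatible pair recombines into a genuine bubble via the first-common-vertex truncation), details the paper leaves implicit.
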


The test of line~\ref{alg2:include} could be implemented using the
same idea. For each $v \in \delta^+(u)$, we test for the existence of
a pair of compatible paths for, say, $u = s_2$ (the same would apply
for $s_1$) and $v$ in $G - u$, that is $v$ is in the subgraph of $G$
obtained by eliminating from $G$ the vertex $u$ and all the arcs
incoming to or outgoing from $u$.  This would lead to a total cost of
$O(n(m+ n \log n))$ for all tests of line~\ref{alg2:include} in each
call. However, this is not enough to achieve an $O(n(m + n \log n))$
delay. In Lemma~\ref{lem:test2}, we present an improved strategy to
perform these tests in $O(m+ n \log n)$ total time.

\begin{lemma} \label{lem:test2}
  The test of line~\ref{alg2:include}, for all $v \in \delta^+(u)$,
  can be performed in $O(m + n \log n)$ total time.
\end{lemma}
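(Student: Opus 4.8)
The plan is to exploit the fact that all the tests of line~\ref{alg2:include}, for the various $v \in \delta^+(u)$, share a common ingredient that needs to be computed only once. Assume without loss of generality that the algorithm chose $u = s_2$ (the case $u = s_1$ is symmetric). By Lemma~\ref{lem:dist} applied to the sources $s_1$ and $v$ in the graph $G - s_2$, with budgets $\alpha_1$ and $\alpha_2' = \alpha_2 - w(s_2,v)$, there is a pair of compatible paths using the arc $(s_2,v)$ if and only if there exists a vertex $t$ with $d_{G-s_2}(s_1,t) \le \alpha_1$ and $d_{G-s_2}(v,t) \le \alpha_2 - w(s_2,v)$. Testing this naively, one shortest path tree rooted at each $v$, costs $O(n(m + n\log n))$; the improvement comes from observing that the first condition does not depend on $v$.

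First I would compute a single shortest path tree $T_1$ rooted at $s_1$ in $G - s_2$, truncated at distance $\alpha_1$, in $O(m + n \log n)$ time. Its vertex set is exactly $R_1 = \{\, t : d_{G - s_2}(s_1,t) \le \alpha_1 \,\}$, which is independent of $v$. With this notation the test for a given $v$ becomes
\[
  \min_{t \in R_1} d_{G - s_2}(v,t) \;\le\; \alpha_2 - w(s_2,v).
\]
The remaining task is therefore to compute, for \emph{every} vertex $v$ at once, the distance from $v$ to the nearest vertex of the set $R_1$. Since $d_{G-s_2}(v,t) = d_{(G-s_2)^{R}}(t,v)$ in the reverse graph, this is a multi-source shortest path computation: I would add a super-source $\sigma$ to $(G - s_2)^{R}$ together with a zero-weight arc from $\sigma$ to each vertex of $R_1$, and run Dijkstra from $\sigma$. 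The resulting label satisfies $\ell(v) = \min_{t \in R_1} d_{G - s_2}(v,t)$ for all $v$, and this second Dijkstra again costs $O(m + n \log n)$.

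Finally, for each $v \in \delta^+(s_2)$ I would simply compare $\ell(v) + w(s_2,v)$ against $\alpha_2$, which is $O(1)$ per neighbor and $O(n)$ in total, so the overall cost is $O(m + n \log n)$ as claimed. The correctness of each individual comparison is exactly Lemma~\ref{lem:dist}; the step I expect to require the most care is the justification that a \emph{single} reverse multi-source Dijkstra correctly yields $\min_{t \in R_1} d_{G-s_2}(v,t)$ simultaneously for all $v$ — that is, the decoupling of the existential quantifier on $t$ from the loop over $v$, which is what collapses the $n$ separate tree computations into just two. One should also dispatch the degenerate case $v = s_1$ (when $s_1 \in \delta^+(s_2)$) separately, and note the symmetric treatment when $u = s_1$, working on the reverse graph from the $s_1$ side instead.
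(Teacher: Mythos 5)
Your proof is correct and follows essentially the same approach as the paper: one Dijkstra from $s_1$ in $G-u$ to obtain the set of vertices within distance $\alpha_1$, then a single reverse-graph Dijkstra from that set (your super-source $\sigma$ with zero-weight arcs is exactly the paper's added sink $r$ viewed from the reversed side), after which each neighbor $v$ is tested in constant time. The only cosmetic difference is that you phrase the second computation as a multi-source shortest path problem rather than as a shortest path tree rooted at an auxiliary vertex, and you additionally flag the degenerate case $v=s_1$, which the paper leaves implicit.
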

\begin{proof}
  Let us assume that $u = s_2$, the case $u = s_1$ is
  symmetric. From Lemma~\ref{lem:dist}, for each $v \in \delta^+(u)$,
  we have that deciding if there exists a pair of compatible paths for
  $s_1$ and $s_2$ in $G$ that uses $(u,v)$ is equivalent to deciding
  if there exists $t$ satisfying (i) $d(s_1,t) \leq \alpha_1$ and (ii)
  $d(v,t) \leq \alpha_2 - w(u,v)$ in $G - u$.

  First, we compute a shortest path tree rooted in $s_1$ for
  $G-u$. Let $V_{\alpha_1}$ be the set of vertices at a distance at
  most $\alpha_1$ from $s_1$. We build a graph $G'$ by adding a new
  vertex $r$ to $G-u$, and for each $y \in V_{\alpha_1}$, we add the
  arcs $(y,r)$ with weight $w(y,r) = 0$.  We claim that there exists
  $t$ in $G-u$ satisfying conditions (i) and (ii) if and only if
  $d(v,r) \leq \alpha_2 - w(u,v)$ in $G'$. Indeed, if $t$ satisfies
  (i) we have that the arc $(t,r)$ is in $G'$, so $d(t,r) = 0$. From
  the triangle inequality and (ii), $d(v,r) \leq d(v,t) + d(t,r) =
  d(v,t) \leq \alpha_2 - w(u,v)$. The other direction is trivial.

  Finally, we compute a shortest path tree $T_r$ rooted in $r$ for the
  reverse graph $G'^R$, obtained by reversing the direction of the
  arcs of $G'$. With $T_r$, we have the distance from any vertex to
  $r$ in $G'$, i.e. we can answer the query $d(v,r) \leq \alpha_2 -
  w(u,v)$ in constant time. Observe that the construction of $T_r$
  depends only on $G-u$, $s_1$ and $\alpha_1$, i.e. $T_r$ is the same
  for all out-neighbors $v \in \delta^+(u)$. Therefore, we can build
  $T_r$ only once in $O(m + n \log n)$ time, with two iterations of
  Dijkstra's algorithm, and use it to answer each test of
  line~\ref{alg2:include} in constant time.
\end{proof}

\begin{theorem}
  Algorithm~\ref{listbubbles2} has $O(n(m + n \log n))$ delay.
\end{theorem}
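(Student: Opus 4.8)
The plan is to bound the delay by the standard recursion-tree argument for binary-partition enumeration: show that (a) every node of the recursion tree is cheap, (b) the tree has small height, and (c) every subtree contains at least one output, so that a depth-first traversal never spends long between two consecutive solutions.

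First I would bound the work done in a single recursive call. The tests of line~\ref{alg2:include} for all $v \in \delta^+(u)$ cost $O(m + n\log n)$ in total by Lemma~\ref{lem:test2}; the test of line~\ref{alg2:exclude} is a single existence test for a pair of compatible paths, which by Lemma~\ref{lem:dist} reduces to computing two truncated shortest path trees and intersecting them, again $O(m + n\log n)$. Every recursive call with $s_1 \neq s_2$ therefore runs in $O(m + n\log n)$. The only exception is the initial test of line~\ref{alg2:initial_test}, costing $O(n(m + n\log n))$ by Lemma~\ref{lem:initial_test}; but that branch is entered only when $s_1 = s_2$ and $B = \emptyset$, which by the partition of Eq.~\ref{eq:partition} occurs solely at the root (once the two sources diverge they remain distinct until they coincide again at an output leaf, where $B \neq \emptyset$). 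Hence every node but the root costs $O(m + n\log n)$, and the root costs $O(n(m + n\log n))$.

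Next I would establish the two structural facts. Because a nonempty subproblem is tested before each recursive call (lines~\ref{alg2:include} and~\ref{alg2:exclude}) and correctness follows from Eq.~\ref{eq:partition}, the subspace at every node is nonempty; thus every leaf outputs a solution and every subtree contains at least one leaf. For the height, each include call (lines~\ref{alg2:rec_include1}--\ref{alg2:rec_include2}) deletes a vertex, so a root-to-leaf path contains at most $n$ of them. I expect the exclude calls to be the delicate point, since they do not decrease the vertex count: an exclude on $u$ removes all arcs out of $u$, so immediately afterwards $\delta^+(u) = \emptyset$ and the next call must pick the other source; excluding that source as well would leave both $s_1$ and $s_2$ as sinks with $s_1 \neq s_2$, which by Lemma~\ref{lem:dist} admits no pair of compatible paths and is therefore pruned by the very test of line~\ref{alg2:exclude}. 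Consequently each exclude is immediately followed by an include, so there are at most $n+1$ of them along any path and the height is $h = O(n)$.

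Finally I would run the delay argument on a depth-first traversal of the recursion tree. After emitting a solution at a leaf, the search backtracks to the nearest ancestor with an unexplored child and descends again; since every subtree contains a leaf, following first-available children reaches the next solution in at most $h$ newly visited nodes, while backtracking costs $O(1)$ per node if the per-node tests are cached. The delay is thus $O(h)$ times the per-node cost, i.e. $O(n(m + n\log n))$, and the one-time $O(n(m + n\log n))$ root test is charged to the interval before the first output, absorbed into the same bound. The main obstacle is exactly the height bound for the exclude branches; the argument above resolves it by showing that such branches cannot chain.
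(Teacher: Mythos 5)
Your proof is correct and follows essentially the same route as the paper's: per-call cost $O(m + n\log n)$ via Lemma~\ref{lem:test2} and Lemma~\ref{lem:dist}, the one-time $O(n(m+n\log n))$ root test charged before the first output, an $O(n)$ bound on the height of the recursion tree, and the guarantee that every leaf outputs a solution, so the delay is the height times the per-node cost. The only deviation is in the height sub-argument, where the paper simply observes that each call deletes either a vertex or the entire out-neighborhood of a vertex (each of which can happen at most $n$ times along a root-to-leaf path, giving height $\leq 2n$), while you instead show that exclude branches cannot chain because excluding both sources leaves a prunable subproblem; both arguments are valid and give the same $O(n)$ bound.
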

\begin{proof}
  The height of the recursion tree is bounded by $2n$ since at each
  call the size of the graph is reduced either by one vertex
  (lines~\ref{alg2:rec_include1} and \ref{alg2:rec_include2}) or all
  its out-neighborhood (line~\ref{alg2:rec_exclude}). After at most
  $2n$ recursive calls, the graph is empty. Since every leaf of the
  recursion tree outputs a solution and the distance between two
  leaves is bounded by $4n$, the delay is $O(n)$ multiplied by the
  cost per node (call) in the recursion tree. From
  Lemma~\ref{lem:dist}, line~\ref{alg2:exclude} takes $O(m + n \log n)$
  time, and from Lemma~\ref{lem:test2}, line~\ref{alg2:include} takes
  $O(m + n \log n)$ total time. This leads to an $O(m + n \log n)$ time per call,
  excluding line~\ref{alg2:initial_test}. Lemma~\ref{lem:initial_test}
  states that the cost for the test in line~\ref{alg2:initial_test} is
  $O(n(m + n \log n))$, but this line is executed only once, at the
  root of the recursion tree. Therefore, the delay is $O(n (m + n \log
  n))$.
\end{proof}

\section{Implementation and experimental results}
We now discuss the details necessary for an efficient implementation
of Algorithm~\ref{listbubbles2} and the results on two sets of
experimental tests. For the first set, our goal is to compare the
running time of Dijkstra's algorithm (for typical DBGs arising from
applications) using several priority queue implementations. With the
second set, our objective is to compare an implementation of
Algorithm~\ref{listbubbles2} to the {\sc Kissplice} algorithm
\cite{Sacomoto2012}.  For both cases, we retrieved from the
\emph{Short Read Archive} (accession code ERX141791) 14M Illumina 79bp
single-ended reads of a \emph{Drosophila melanogaster} RNA-seq
experiment. We then built the de Bruijn graph for this dataset with $k
= 31$ using the {\sc Minia} algorithm~\cite{Rayan2012}. In order to
remove likely sequencing errors, we discarded all $k$-mers that are
present less than 3 times in the dataset. The resulting graph
contained 22M $k$-mers, which after compressing all maximal linear
paths, corresponded to 600k vertices.

In order to perform a fair comparison with {\sc Kissplice}, we
pre-processed the graph as described in \cite{Sacomoto2012}. Namely,
we decomposed the underlying undirected graph into biconnected
components (BCCs) and compressed all non-branching bubbles with equal
path lengths. In the end, after discarding all BCCs with less than 4
vertices (as they cannot contain a bubble), we obtained 7113 BCCs, the
largest one containing 24977 vertices.  This pre-processing is
lossless, i.e. every bubble in the original graph is entirely
contained in exactly one BCC. In {\sc Kissplice}, the enumeration is
then done in each BCC independently.

\subsection{Dijkstra's algorithm with different priority queues} \label{subsec:dijkstra}
Dijkstra's algorithm is an important subroutine of
Algorithm~\ref{listbubbles2} that may have a big influence on its
running time. Actually, the time complexity of
Algorithm~\ref{listbubbles2} can be written as $O(n c(n,m))$, where
$c(n,m)$ is the complexity of Dijkstra's algorithm. There are several
variants of this algorithm~\cite{Cormen}, with different complexities
depending on the priority queue used, including binary heaps ($O(m
\log n)$) and Fibonacci heaps ($O(m + n \log n)$). In the particular
case where all the weights are non-negative integers bounded by $C$,
Dijkstra's algorithm can be implemented using radix heaps ($O(m + n
\log C)$) \cite{Tarjan}. As stated in
Section~\ref{sec:debruijn_and_as}, the weights of the de Bruijn graphs
considered here are integer, but not necessarily bounded.  However, we
can remove from the graph all arcs with weights greater than
$\alpha_1$ since these are not part of any $(s,t,\alpha_1,
\alpha_2)$-bubble. This results in a complexity of $O(m + n \log
\alpha_1)$ for Dijkstra's algorithm.

\begin{figure}[Htbp]
  \center
  \includegraphics[width=7cm]{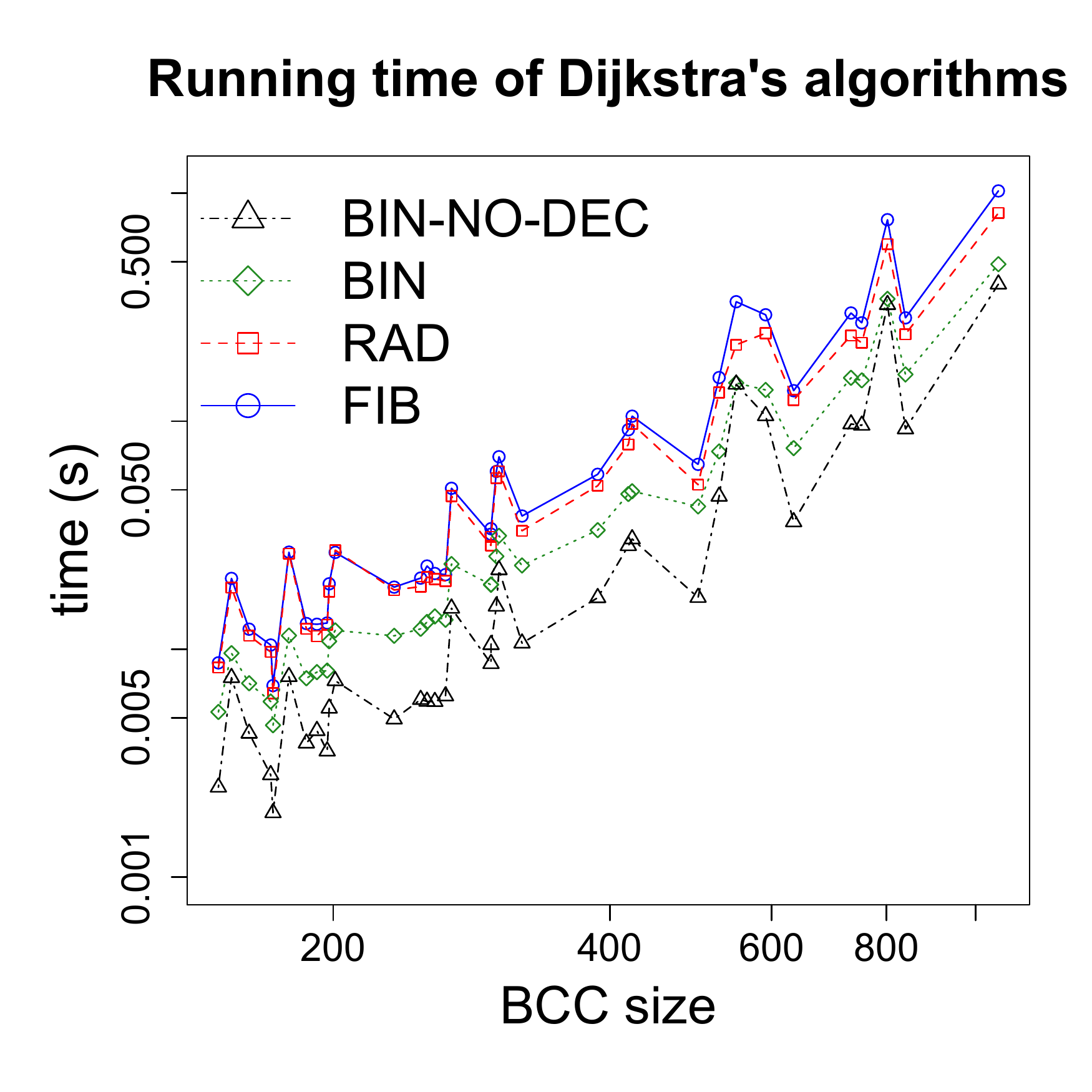}\label{fig:running_time_dijkstra}
  \caption{Running times for each version of Dijkstra's algorithm:
    using Fibonacci heaps (FIB), using radix heaps (RAD), using binary
    heaps (BIN) and using binary heaps without the decrease-key
    operation (BIN-NO-DEC).  The tests were done including all BCCs
    with more than 150 vertices. Both axes are in logarithmic
    scale.}  \label{fig:dijkstra}
\end{figure}

We implemented four versions of Lemma~\ref{lem:initial_test} (for
deciding whether there exists a $(s,t,\alpha_1, \alpha_2)$-bubble for
a given $s$) each using a different version of Dijkstra's algorithm:
with Fibonacci heaps (FIB), with radix heaps (RAD), with binary heaps
(BIN) and with binary heaps without decrease-key operation
(BIN-NO-DEC). The last version is Dijkstra's modified in order not to
use the decrease-key operation so that we can use a simpler binary
heap that does not support such operation~\cite{Chen}.  We then ran
the four versions, using $\alpha_1 = 1000$ and $\alpha_2 = 2k - 2 =
60$, for each vertex in all the BCCs with more than 150 vertices. The
results are shown\footnote{The results for the largest BCC were
  omitted from the plot to improve the visualization. It took 942.15s
  for FIB and 419.84s for BIN-NO-DEC.}  in
Fig.~\ref{fig:dijkstra}. Contrary to the theoretical predictions, the
versions with the best complexities, FIB and RAD, have the worst
results on this type of instances. It is clear that the best version
is BIN-NO-DEC, which is at least 2.2 times and at most 4.3 times
faster than FIB.  One of the factors possibly contribuiting to a
better performance of BIN and BIN-NO-DEC is the fact that cDBGs, as
stated in Section~\ref{sec:debruijn_and_as}, have bounded degree and
are therefore sparse.

\subsection{Comparison with the {\sc Kissplice} algorithm} \label{subsec:comp_kissplice}

In this section, we compare Algorithm~\ref{listbubbles2} to the {\sc
  Kissplice} (version 1.8.1) enumeration
algorithm~\cite{Sacomoto2012}. To this purpose, we implemented
Algorithm~\ref{listbubbles2} using Dijkstra's algorithm with binary
heaps without the decrease-key operation for all shortest paths
computation. In this way, the delay of Algorithm~\ref{listbubbles2}
becomes $O(nm \log n)$, which is worse than the one using Fibonacci or
radix heaps, but is faster in practice. The goal of the {\sc
  Kissplice} enumeration is to find all the potential alternative
splicing events in a BCC, i.e. to find all
$(s,t,\alpha_1,\alpha_2)$-bubbles satisfying also the lower bound
constraint (Section~\ref{sec:debruijn_and_as}).  In order to compare
{\sc Kissplice} to Algorithm~\ref{listbubbles2}, we (naively) modified
the latter so that, whenever a $(s,t,\alpha_1,\alpha_2)$-bubble is
found, we check whether it also satisfies the lower bound constraints
and output it only if it does.

In {\sc Kissplice}, the upper bound $\alpha_1$ is an open parameter,
$\alpha_2 = k-1$ and the lower bound is $k - 7$. Moreover, there are
two stop conditions: either when more than 10000
$(s,t,\alpha_1,\alpha_2)$-bubbles satisfying the lower bound
constraint have been enumerated or a 900s timeout has been reached.
We ran both {\sc Kissplice} (version 1.8.1) and the modified
Algorithm~\ref{listbubbles2}, with the stop conditions, for all 7113
BCCs, using $\alpha_2 = 60$, a lower bound of $54$ and $\alpha_1 =
250,500,750$ and $1000$. The running times for all BCCs with more than
150 vertices (there are 37) is shown\footnote{The BCCs where
  \emph{both} algorithms reach the timeout were omitted from the plots
  to improve the visualization. For $\alpha_1 = 250, 500, 750$ and
  $1000$ there are 1, 2, 3 and 3 BCCs omitted, respectively.}  in
Fig.~\ref{fig:running_time}. For the BCCs smaller than 150 vertices,
both algorithms have comparable (very small) running times. For
instance, with $\alpha_1 = 250$, {\sc Kissplice} runs in 17.44s for
\emph{all} 7113 BCCs with less than 150 vertices, while
Algorithm~\ref{listbubbles2} runs in 15.26s.

The plots in Fig.~\ref{fig:running_time} show a trend of increasing
running times for larger BCCs, but the graphs are not very smooth,
i.e. there are some sudden decreases and increases in the running
times observed. This is in part due to the fact that the time complexity of 
Algorithm~\ref{listbubbles2} is output sensitive. The delay of the
algorithm is $O(nm \log n)$, but the total time complexity is
$O(|\mathcal{B}|nm \log n)$, where $|\mathcal{B}|$ is the number of
$(s,t,\alpha_1,\alpha_2)$-bubbles in the graph. The number of bubbles
in the graph depends on its internal structure. A large graph does not
necessarily have a large number of bubbles, while a small graph may have
an exponential number of bubbles. Therefore, the value of
$|\mathcal{B}|nm \log n$ can decrease by increasing the size of the
graph.

Concerning now the comparison between the algorithms, as we can see in
Fig.~\ref{fig:running_time}, Algorithm~\ref{listbubbles2} is usually
several times faster (keep in mind that the axes are in logarithmic
scale) than {\sc Kissplice}, with larger differences when $\alpha_1$
increases (10 to 1000 times faster when $\alpha_1 = 1000$).  In some
instances however, {\sc Kissplice} is faster than
Algorithm~\ref{listbubbles2}, but (with only one exception for
$\alpha_1 = 250$ and $\alpha_1 = 500$) they correspond either to very
small instances or to cases where only 10000 bubbles were enumerated
and the stop condition was met.  Finally, using
Algorithm~\ref{listbubbles2}, the computation finished within 900s for
all but 3 BCCs, whereas using {\sc Kissplice}, 11 BCCs remained
unfinished after 900s.  The improvement in time therefore enables us to
have access to bubbles that could not be enumerated with the previous
approach.

\begin{figure}[Htbp]
  \center
  \subfigure[]{\includegraphics[width=5.9cm]{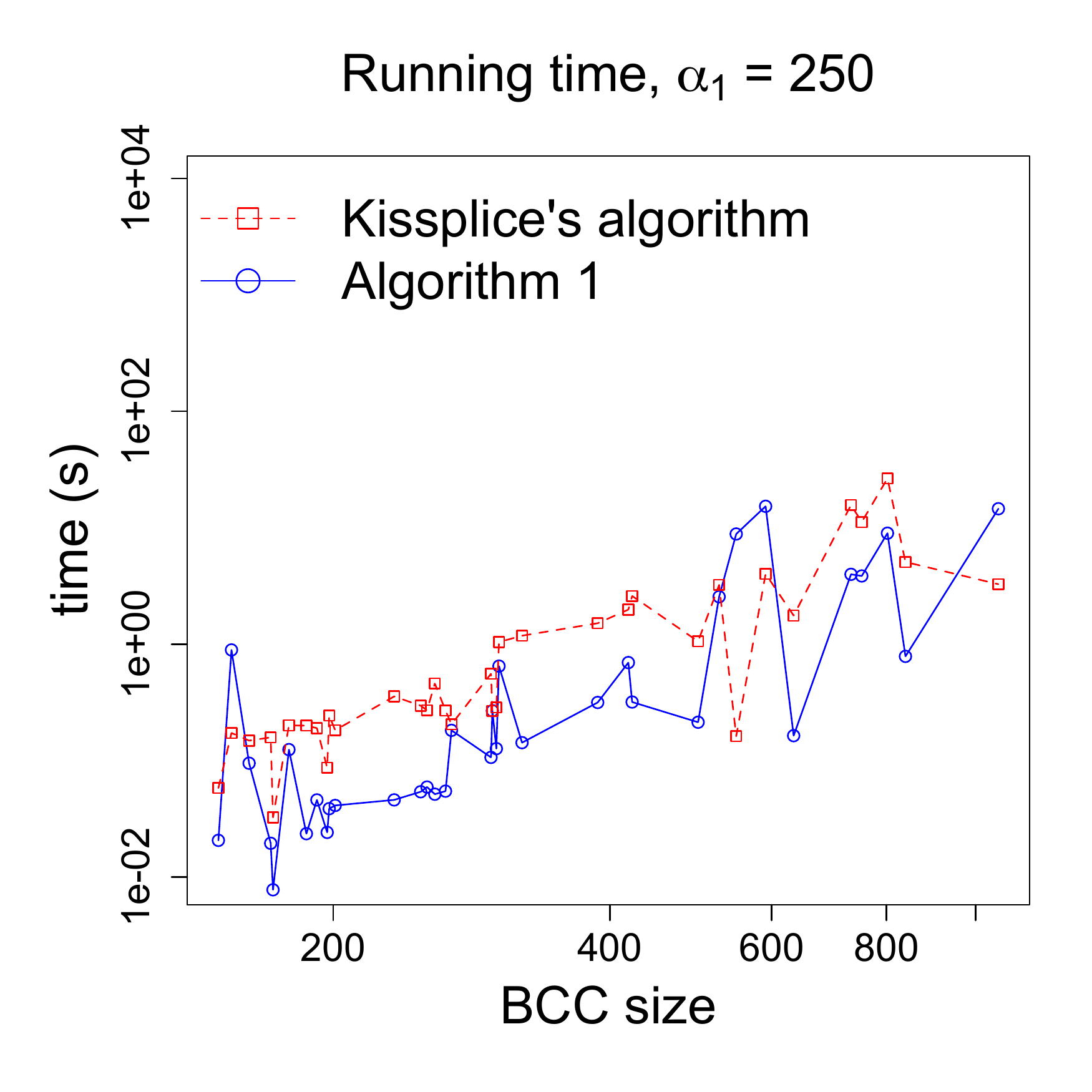}\label{fig:running_time_250}} 
  \subfigure[]{\includegraphics[width=5.9cm]{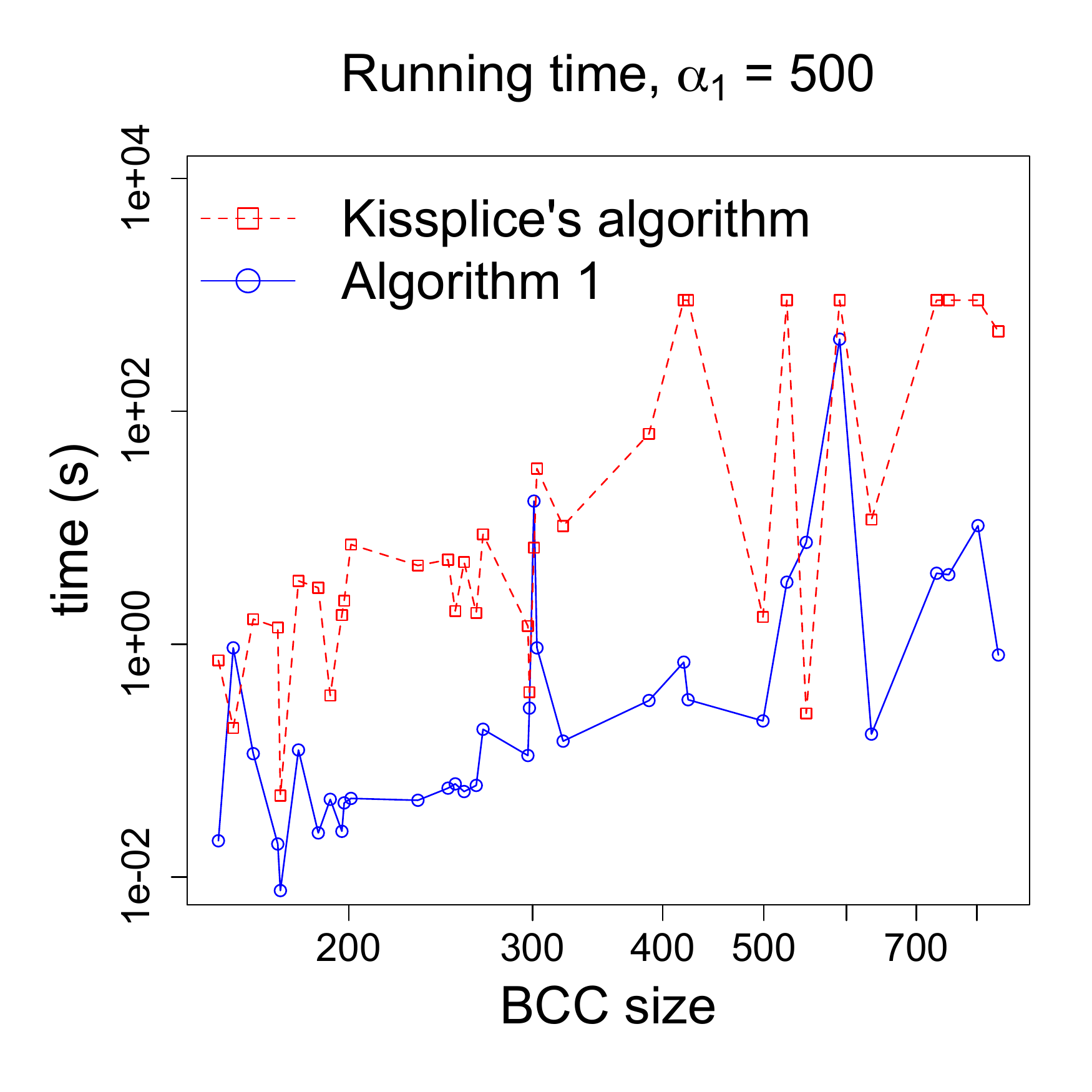}\label{fig:running_time_500}} \\
  \subfigure[]{\includegraphics[width=5.9cm]{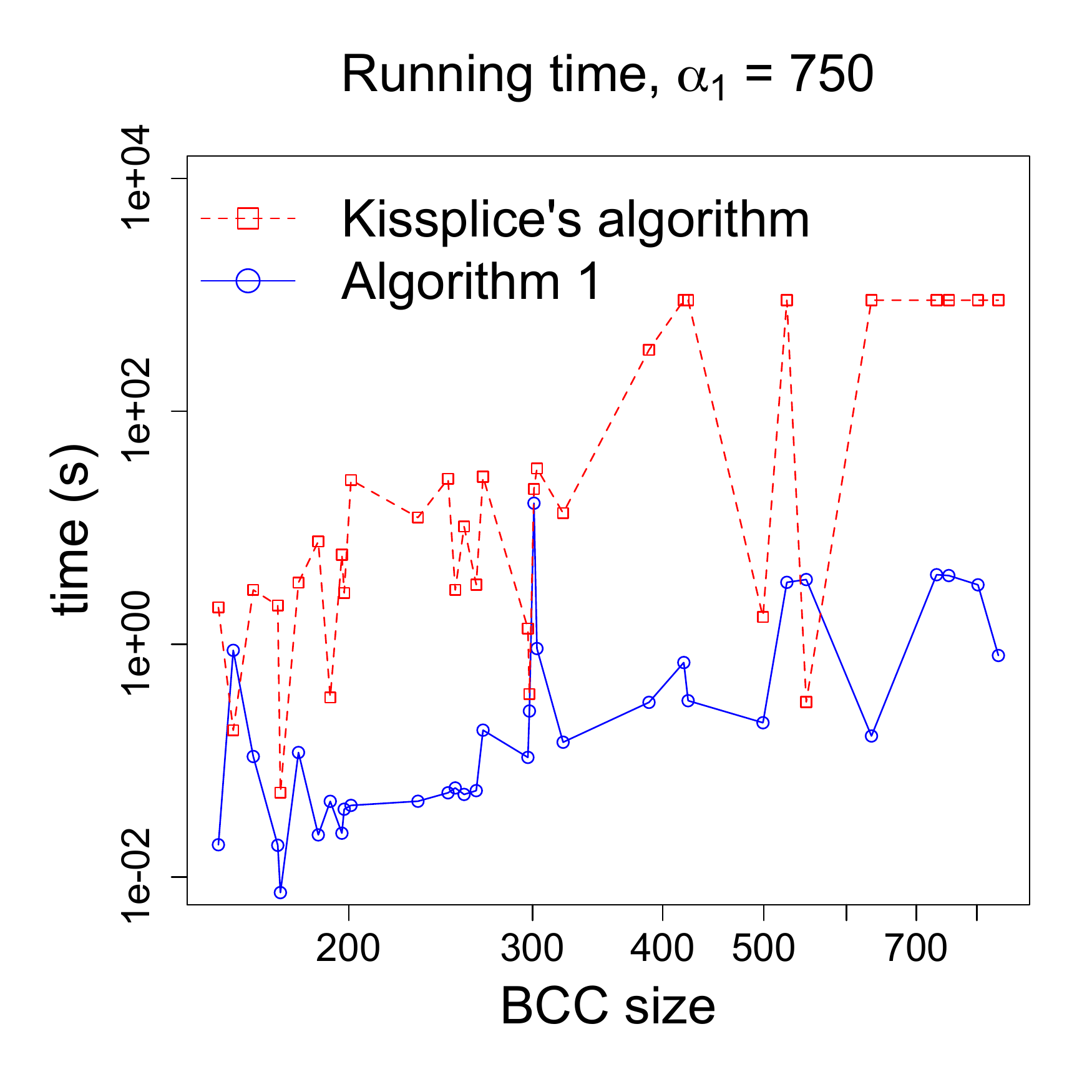} \label{fig:running_time_750}} 
  \subfigure[]{\includegraphics[width=5.9cm]{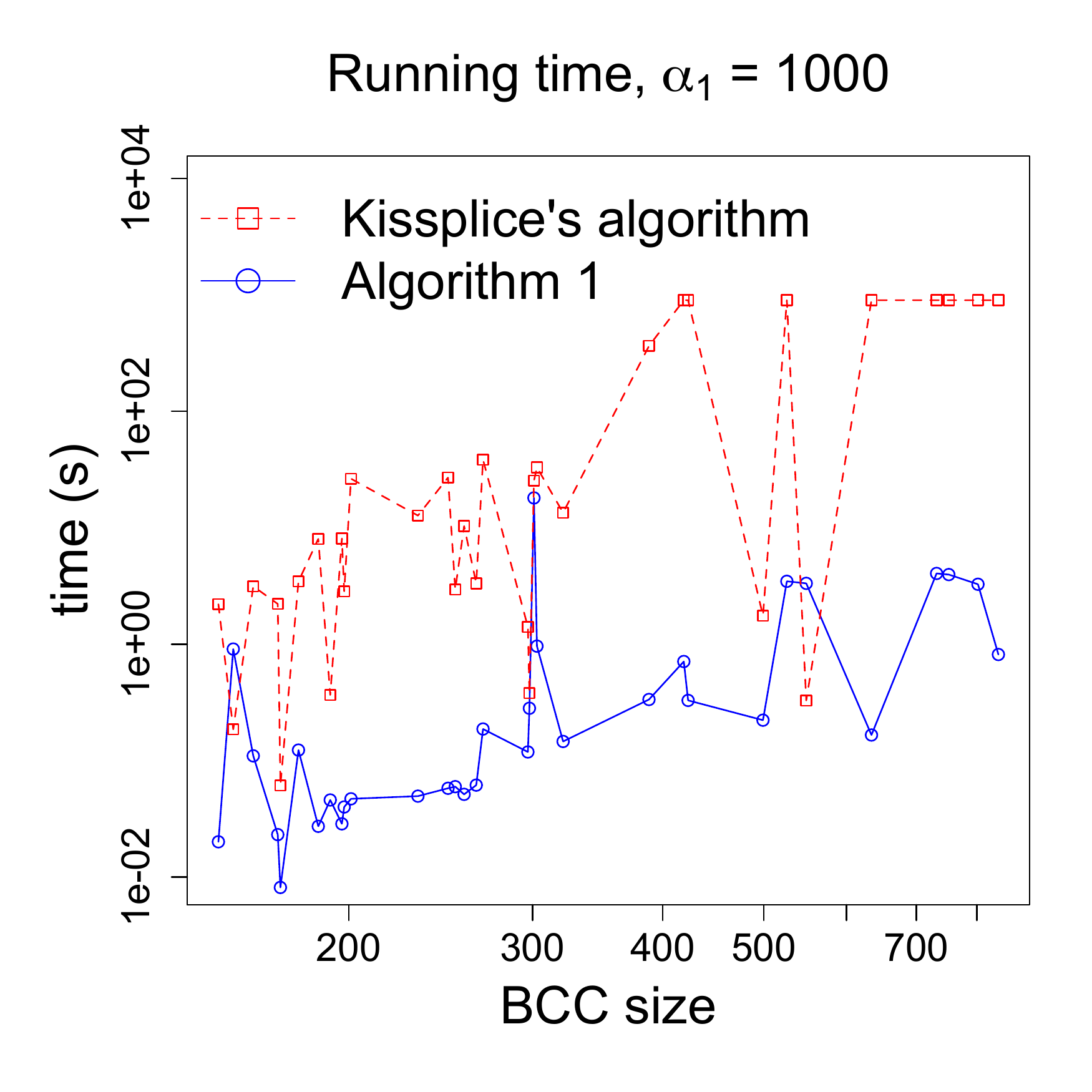}\label{fig:running_time_1000}}
  \caption{Running times of Algorithm~\ref{listbubbles2} and
    of the {\sc Kissplice} algorithm~\cite{Sacomoto2012} for all the BCCs with
    more than 150 vertices. Each graph (a), (b), (c) and (d) shows the
    running time of both algorithms for $\alpha_1 = 250, 500, 750$ and
    $1000$, respectively.} \label{fig:running_time}
\end{figure} 

\subsection{On the usefulness of larger values of $\alpha_1$}
In the implementation of {\sc Kissplice} \cite{KisspliceManual}, the
value of $\alpha_1$ was experimentally set to 1000 due to performance
issues, as indeed the algorithm quickly becomes impractical for larger
values. On the other hand, the results of
Section~\ref{subsec:comp_kissplice} suggest that
Algorithm~\ref{listbubbles2}, that is faster than {\sc Kissplice}, can
deal with larger values of $\alpha_1$.  From a biological point of
view, it is a priori possible to argue that $\alpha_1 = 1000$ is a
reasonable choice, because 87\% of annotated exons in Drosophila
indeed are shorter than 1000nt \cite{Refseq}. However, missing the top
13\% may have a big impact on downstream analyses of AS, not to
mention the possibility that not yet annotated AS events could be
enriched in long skipped exons.  In this section, we outline that
larger values of $\alpha_1$ indeed produces more results that are
biologically relevant. For this, we exploit another RNA-seq dataset,
with deeper coverage.

\begin{figure}[Htbp]
  \center
  \includegraphics[width=12cm]{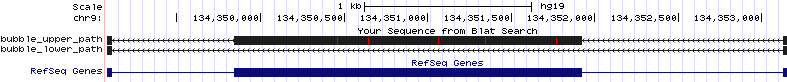}
  \label{fig:exon_skipping}
  \caption{One of the bubbles with longest path larger than 1000 bp
    found by Algorithm~\ref{listbubbles2} with the corresponding
    sequences mapped to the reference genome and visualized using the UCSC
    Genome Browser. The first two lines correspond to the sequences of, respectively,
    the shortest (exon exclusion variant) and longest paths of the
    bubble mapped to the genome.  The blue lines are
    the UCSC human transcript annotations. }
\end{figure}

To this purpose, we retrieved 32M RNA-seq reads from the human brain
and 39M from the human liver from the Short Read Archive (accession
number ERP000546). Next, we built the de Bruijn graph with $k=31$ for
both datasets, then merged and decomposed the DBG into 5692 BCCs
(containing more than 10 vertices). We ran
Algorithm~\ref{listbubbles2} for each BCC with $\alpha_1 = 5000$. It
took 4min25s for Algorithm~\ref{listbubbles2} to run on all BCCs,
whereas {\sc Kissplice}, even using $\alpha_1 = 1000$, took 31min45s,
almost 8 times more. There were 59 BCCs containing at least one bubble
with the length of the longest path strictly larger than 1000bp
potentially corresponding to alternative splicing events. In
Fig.~\ref{fig:exon_skipping}, we show one of those bubbles mapped to
the reference genome. It corresponds to an exon skipping in the PRRC2B
human gene, the skipped exon containing 2069 bp. While the transcript
containing the exon is annotated, the variant with the exon skipped is
not annotated.

Furthermore, we ran {\sc Trinity}~\cite{Trinity} (the most widely used
transcriptome assembler) on the same dataset and found that it was
unable to report this novel variant.  Our method therefore enables us
to find new AS events, reported by no other method. This is, of
course, just an indication of the usefulness of our approach when
compared to a full-transcriptome assembler. A more systematic
comparision with {\sc Trinity}, as done in \cite{Sacomoto2012}, is out
of the scope of this work.

\section{A natural generalization} \label{sec:gen}
For the sake of theoretical completeness, in this section, we extend
the definition of $(s,t,\alpha_1,\alpha_2)$-bubble to the case where
the length constraints concern $d$ vertex-disjoint paths, for an
arbitrary but fixed $d$.

\begin{definition}[$(s,t, A)$-$d$-bubble] \label{def:kbubble}
  Let $d$ be a natural number and $A = \{\alpha_1, \ldots, \alpha_d\}
  \subset \mathbb{Q}_{\geq0}$.  Given a directed weighted graph $G$
  and two vertices $s$ and $t$, an \emph{$(s,t,A)$-$d$-bubble} is a
  set of $d$ pairwise internally vertex-disjoint paths $\{p_1, \ldots
  p_d\}$, satisfying $p_i = s \leadsto t$ and $|p_i| \leq \alpha_i$,
  for all $i \in [1,d]$.
\end{definition} 

Analogously to $(s,t,\alpha_1,\alpha_2)$-bubbles, we can define two
variants of the enumeration problem: all bubbles with a given source
($s$ fixed) and all bubbles with a given source and target ($s$ and
$t$ fixed). In both cases, the first step is to decide the existence
of at least one $(s,t, A)$-$d$-bubble in the graph.

\begin{problem}[$(s,t,A)$-$d$-bubble decision problem] \label{prob:bounded_bubble_st}
  Given a non-negatively weighted directed graph $G$, two vertices
  $s,t$, a set $A = \{\alpha_1, \ldots, \alpha_d\} \subset
  \mathbb{Q}_{\geq0}$ and $d \in \mathbb{N}$, decide if there exists a
  $(s,t, A)$-$d$-bubble.
\end{problem} 

This problem is a generalization of the two-disjoint-paths problem
with a min-max objective function, which is
NP-complete~\cite{chung1990}. More formally, this problem can be
stated as follows: given a directed graph $G$ with non-negative
weights, two vertices $s,t \in V$, and a maximum length $M$, decide if
there exists a pair of vertex-disjoint paths such that the maximum of
their lengths is less than $M$. The $(s,t, A)$-$d$-bubble decision
problem, with $A = \{M,M\}$ and $d=2$, is precisely this problem.

\begin{problem}[$(s,*,A)$-$d$-bubble decision problem] \label{prob:bounded_bubble_s*}
  Given a non-negatively weighted directed graph $G$, a vertex $s$, a
  set $A = \{\alpha_1, \ldots, \alpha_d\} \subset \mathbb{Q}_{\geq0}$
  and $d \in \mathbb{N}$, decide if there exists a $(s,t,
  A)$-$d$-bubble, for some $t \in V$.
\end{problem} 

The two-disjoint-path problem with a min-max objective function is
NP-complete even for strictly positive weighted graphs. Let us reduce
Problem~\ref{prob:bounded_bubble_s*} to it.  Consider a graph $G$ with
strictly positive weights, two vertices $s,t \in V$, and a maximum
length $M$. Construct the graph $G'$ by adding an arc with weights $0$
from $s$ to $t$ and use this as input for the
$(s,*,\{M,M,0\})$-$3$-bubble decision problem. Since $G$ has strictly
positive weights, the only path with length $0$ from $s$ to $t$ in
$G'$ is the added arc. Thus, there is a $(s,*,\{M,M,0\})$-$3$-bubble
in $G'$ if and only if there are two vertex-disjoint paths in $G$ each
with a length $\leq M$.

Therefore, the decision problem for fixed $s$
(Problem~\ref{prob:bounded_bubble_st}) is NP-hard for $d \geq 2$, and
for fixed $s$ and $t$ (Problem~\ref{prob:bounded_bubble_s*}) is
NP-hard for $d \geq 3$.  In other words, the only tractable case is
the enumeration of $(s,t, A)$-$2$-bubbles with fixed $s$, the one
considered in Section~\ref{sec:alg_delay}.

\section{Conclusion}
We introduced a polynomial delay algorithm which enumerates all
bubbles with length constraints in directed graphs.  We show that it
is faster than previous approaches and therefore enables us to enumerate
more bubbles. These additional bubbles correspond to longer AS events,
overseen previously, but biologically very relevant.  As shown
in~\cite{Tarjan}, by combining radix and Fibonacci heaps in Dijkstra,
we can achieve a complexity in $O(n(m + n \sqrt{ \log \alpha_1)})$ for
Algorithm~\ref{listbubbles2} in cDGBs. The question whether this can
be improved, either by improving Dijkstra's algorithm (exploiting more
properties of a cDBG) or by using a different approach, remains open.

\paragraph{Acknowledgements} This work was funded by the ANR-12-BS02-0008 
(Colib'read); the French project ANR MIRI BLAN08-1335497; and the
European Research Council under the European Community's Seventh
Framework Programme (FP7 /2007-2013) / ERC grant agreement
no. [247073]10.

\bibliographystyle{plain} \bibliography{biblio} \nocite{*}

\end{document}